% Default to the notebook output style

% Inherit from the specified cell style.

\documentclass{article}

    \usepackage{iftex}
    \ifPDFTeX
    	\usepackage[T1]{fontenc}
    	\usepackage{mathpazo}
    \else
    	\usepackage{fontspec}
    \fi

    % Basic figure setup, for now with no caption control since it's done
    % automatically by Pandoc (which extracts ![](path) syntax from Markdown).
    \usepackage{graphicx}
    % Maintain compatibility with old templates. Remove in nbconvert 6.0
    
    % Ensure that by default, figures have no caption (until we provide a
    % proper Figure object with a Caption API and a way to capture that
    % in the conversion process - todo).
    \usepackage{caption}
    \DeclareCaptionFormat{nocaption}{}
    \captionsetup{format=nocaption,aboveskip=0pt,belowskip=0pt}

    \usepackage{float}
    \floatplacement{figure}{H} % forces figures to be placed at the correct location
    \usepackage{xcolor} % Allow colors to be defined
    \usepackage{enumerate} % Needed for markdown enumerations to work
     % Used to adjust the document margins
    \usepackage{amsmath} % Equations
    \usepackage{amssymb} % Equations
    \usepackage{textcomp} % defines textquotesingle
    % Hack from http://tex.stackexchange.com/a/47451/13684:
    \AtBeginDocument{%
        % Upright quotes in Pygmentized code
    }
    \usepackage{upquote} % Upright quotes for verbatim code
    \usepackage{eurosym} % defines \euro
    \usepackage[mathletters]{ucs} % Extended unicode (utf-8) support
    \usepackage{fancyvrb} % verbatim replacement that allows latex
    \usepackage{grffile} % extends the file name processing of package graphics 
                         % to support a larger range
    \makeatletter % fix for old versions of grffile with XeLaTeX
    \@ifpackagelater{grffile}{2019/11/01}
    {
      % Do nothing on new versions
    }
    {
      \def\Gread@@xetex#1{%
        \IfFileExists{"\Gin@base".bb}%
        {\Gread@eps{\Gin@base.bb}}%
        {\Gread@@xetex@aux#1}%
      }
    }
    \makeatother
    \usepackage[Export]{adjustbox} % Used to constrain images to a maximum size
    \adjustboxset{max size={0.9\linewidth}{0.9\paperheight}}

    % The hyperref package gives us a pdf with properly built
    % internal navigation ('pdf bookmarks' for the table of contents,
    % internal cross-reference links, web links for URLs, etc.)
    \usepackage{hyperref}
    % The default LaTeX title has an obnoxious amount of whitespace. By default,
    % titling removes some of it. It also provides customization options.
    \usepackage{titling}
    \usepackage{longtable} % longtable support required by pandoc >1.10
    \usepackage{booktabs}  % table support for pandoc > 1.12.2
    \usepackage[inline]{enumitem} % IRkernel/repr support (it uses the enumerate* environment)
    \usepackage[normalem]{ulem} % ulem is needed to support strikethroughs (\sout)
                                % normalem makes italics be italics, not underlines
    \usepackage{mathrsfs}
    
      \usepackage{arxiv}

    % Colors for the hyperref package
    \definecolor{urlcolor}{rgb}{0,.145,.698}
    \definecolor{linkcolor}{rgb}{.71,0.21,0.01}
    \definecolor{citecolor}{rgb}{.12,.54,.11}

    % ANSI colors
    \definecolor{ansi-black}{HTML}{3E424D}
    \definecolor{ansi-black-intense}{HTML}{282C36}
    \definecolor{ansi-red}{HTML}{E75C58}
    \definecolor{ansi-red-intense}{HTML}{B22B31}
    \definecolor{ansi-green}{HTML}{00A250}
    \definecolor{ansi-green-intense}{HTML}{007427}
    \definecolor{ansi-yellow}{HTML}{DDB62B}
    \definecolor{ansi-yellow-intense}{HTML}{B27D12}
    \definecolor{ansi-blue}{HTML}{208FFB}
    \definecolor{ansi-blue-intense}{HTML}{0065CA}
    \definecolor{ansi-magenta}{HTML}{D160C4}
    \definecolor{ansi-magenta-intense}{HTML}{A03196}
    \definecolor{ansi-cyan}{HTML}{60C6C8}
    \definecolor{ansi-cyan-intense}{HTML}{258F8F}
    \definecolor{ansi-white}{HTML}{C5C1B4}
    \definecolor{ansi-white-intense}{HTML}{A1A6B2}
    \definecolor{ansi-default-inverse-fg}{HTML}{FFFFFF}
    \definecolor{ansi-default-inverse-bg}{HTML}{000000}

    % common color for the border for error outputs.
    \definecolor{outerrorbackground}{HTML}{FFDFDF}

    % commands and environments needed by pandoc snippets
    % extracted from the output of `pandoc -s`
    \providecommand{\tightlist}{%
      \setlength{\itemsep}{0pt}\setlength{\parskip}{0pt}}
    \DefineVerbatimEnvironment{Highlighting}{Verbatim}{commandchars=\\\{\}}
    % Add ',fontsize=\small' for more characters per line

    % Additional commands for more recent versions of Pandoc

    % Define a nice break command that doesn't care if a line doesn't already
    % exist.
    
    % Math Jax compatibility definitions

    \let\Oldtex\TeX
    \let\Oldlatex\LaTeX
    \renewcommand{\TeX}{\textrm{\Oldtex}}
    \renewcommand{\LaTeX}{\textrm{\Oldlatex}}
    % Document parameters
    % Document title

    % Pygments definitions
    
\makeatletter
\def\PY@reset{\let\PY@it=\relax \let\PY@bf=\relax%
    \let\PY@ul=\relax \let\PY@tc=\relax%
    \let\PY@bc=\relax \let\PY@ff=\relax}
\def\PY@tok#1{\csname PY@tok@#1\endcsname}
\def\PY@toks#1+{\ifx\relax#1\empty\else%
    \PY@tok{#1}\expandafter\PY@toks\fi}
\def\PY@do#1{\PY@bc{\PY@tc{\PY@ul{%
    \PY@it{\PY@bf{\PY@ff{#1}}}}}}}
\def\PY#1#2{\PY@reset\PY@toks#1+\relax+\PY@do{#2}}

\expandafter\def\csname PY@tok@w\endcsname{\def\PY@tc##1{\textcolor[rgb]{0.73,0.73,0.73}{##1}}}
\expandafter\def\csname PY@tok@c\endcsname{\let\PY@it=\textit\def\PY@tc##1{\textcolor[rgb]{0.25,0.50,0.50}{##1}}}
\expandafter\def\csname PY@tok@cp\endcsname{\def\PY@tc##1{\textcolor[rgb]{0.74,0.48,0.00}{##1}}}
\expandafter\def\csname PY@tok@k\endcsname{\let\PY@bf=\textbf\def\PY@tc##1{\textcolor[rgb]{0.00,0.50,0.00}{##1}}}
\expandafter\def\csname PY@tok@kp\endcsname{\def\PY@tc##1{\textcolor[rgb]{0.00,0.50,0.00}{##1}}}
\expandafter\def\csname PY@tok@kt\endcsname{\def\PY@tc##1{\textcolor[rgb]{0.69,0.00,0.25}{##1}}}
\expandafter\def\csname PY@tok@o\endcsname{\def\PY@tc##1{\textcolor[rgb]{0.40,0.40,0.40}{##1}}}
\expandafter\def\csname PY@tok@ow\endcsname{\let\PY@bf=\textbf\def\PY@tc##1{\textcolor[rgb]{0.67,0.13,1.00}{##1}}}
\expandafter\def\csname PY@tok@nb\endcsname{\def\PY@tc##1{\textcolor[rgb]{0.00,0.50,0.00}{##1}}}
\expandafter\def\csname PY@tok@nf\endcsname{\def\PY@tc##1{\textcolor[rgb]{0.00,0.00,1.00}{##1}}}
\expandafter\def\csname PY@tok@nc\endcsname{\let\PY@bf=\textbf\def\PY@tc##1{\textcolor[rgb]{0.00,0.00,1.00}{##1}}}
\expandafter\def\csname PY@tok@nn\endcsname{\let\PY@bf=\textbf\def\PY@tc##1{\textcolor[rgb]{0.00,0.00,1.00}{##1}}}
\expandafter\def\csname PY@tok@ne\endcsname{\let\PY@bf=\textbf\def\PY@tc##1{\textcolor[rgb]{0.82,0.25,0.23}{##1}}}
\expandafter\def\csname PY@tok@nv\endcsname{\def\PY@tc##1{\textcolor[rgb]{0.10,0.09,0.49}{##1}}}
\expandafter\def\csname PY@tok@no\endcsname{\def\PY@tc##1{\textcolor[rgb]{0.53,0.00,0.00}{##1}}}
\expandafter\def\csname PY@tok@nl\endcsname{\def\PY@tc##1{\textcolor[rgb]{0.63,0.63,0.00}{##1}}}
\expandafter\def\csname PY@tok@ni\endcsname{\let\PY@bf=\textbf\def\PY@tc##1{\textcolor[rgb]{0.60,0.60,0.60}{##1}}}
\expandafter\def\csname PY@tok@na\endcsname{\def\PY@tc##1{\textcolor[rgb]{0.49,0.56,0.16}{##1}}}
\expandafter\def\csname PY@tok@nt\endcsname{\let\PY@bf=\textbf\def\PY@tc##1{\textcolor[rgb]{0.00,0.50,0.00}{##1}}}
\expandafter\def\csname PY@tok@nd\endcsname{\def\PY@tc##1{\textcolor[rgb]{0.67,0.13,1.00}{##1}}}
\expandafter\def\csname PY@tok@s\endcsname{\def\PY@tc##1{\textcolor[rgb]{0.73,0.13,0.13}{##1}}}
\expandafter\def\csname PY@tok@sd\endcsname{\let\PY@it=\textit\def\PY@tc##1{\textcolor[rgb]{0.73,0.13,0.13}{##1}}}
\expandafter\def\csname PY@tok@si\endcsname{\let\PY@bf=\textbf\def\PY@tc##1{\textcolor[rgb]{0.73,0.40,0.53}{##1}}}
\expandafter\def\csname PY@tok@se\endcsname{\let\PY@bf=\textbf\def\PY@tc##1{\textcolor[rgb]{0.73,0.40,0.13}{##1}}}
\expandafter\def\csname PY@tok@sr\endcsname{\def\PY@tc##1{\textcolor[rgb]{0.73,0.40,0.53}{##1}}}
\expandafter\def\csname PY@tok@ss\endcsname{\def\PY@tc##1{\textcolor[rgb]{0.10,0.09,0.49}{##1}}}
\expandafter\def\csname PY@tok@sx\endcsname{\def\PY@tc##1{\textcolor[rgb]{0.00,0.50,0.00}{##1}}}
\expandafter\def\csname PY@tok@m\endcsname{\def\PY@tc##1{\textcolor[rgb]{0.40,0.40,0.40}{##1}}}
\expandafter\def\csname PY@tok@gh\endcsname{\let\PY@bf=\textbf\def\PY@tc##1{\textcolor[rgb]{0.00,0.00,0.50}{##1}}}
\expandafter\def\csname PY@tok@gu\endcsname{\let\PY@bf=\textbf\def\PY@tc##1{\textcolor[rgb]{0.50,0.00,0.50}{##1}}}
\expandafter\def\csname PY@tok@gd\endcsname{\def\PY@tc##1{\textcolor[rgb]{0.63,0.00,0.00}{##1}}}
\expandafter\def\csname PY@tok@gi\endcsname{\def\PY@tc##1{\textcolor[rgb]{0.00,0.63,0.00}{##1}}}
\expandafter\def\csname PY@tok@gr\endcsname{\def\PY@tc##1{\textcolor[rgb]{1.00,0.00,0.00}{##1}}}
\expandafter\def\csname PY@tok@ge\endcsname{\let\PY@it=\textit}
\expandafter\def\csname PY@tok@gs\endcsname{\let\PY@bf=\textbf}
\expandafter\def\csname PY@tok@gp\endcsname{\let\PY@bf=\textbf\def\PY@tc##1{\textcolor[rgb]{0.00,0.00,0.50}{##1}}}
\expandafter\def\csname PY@tok@go\endcsname{\def\PY@tc##1{\textcolor[rgb]{0.53,0.53,0.53}{##1}}}
\expandafter\def\csname PY@tok@gt\endcsname{\def\PY@tc##1{\textcolor[rgb]{0.00,0.27,0.87}{##1}}}
\expandafter\def\csname PY@tok@err\endcsname{\def\PY@bc##1{\setlength{\fboxsep}{0pt}\fcolorbox[rgb]{1.00,0.00,0.00}{1,1,1}{\strut ##1}}}
\expandafter\def\csname PY@tok@kc\endcsname{\let\PY@bf=\textbf\def\PY@tc##1{\textcolor[rgb]{0.00,0.50,0.00}{##1}}}
\expandafter\def\csname PY@tok@kd\endcsname{\let\PY@bf=\textbf\def\PY@tc##1{\textcolor[rgb]{0.00,0.50,0.00}{##1}}}
\expandafter\def\csname PY@tok@kn\endcsname{\let\PY@bf=\textbf\def\PY@tc##1{\textcolor[rgb]{0.00,0.50,0.00}{##1}}}
\expandafter\def\csname PY@tok@kr\endcsname{\let\PY@bf=\textbf\def\PY@tc##1{\textcolor[rgb]{0.00,0.50,0.00}{##1}}}
\expandafter\def\csname PY@tok@bp\endcsname{\def\PY@tc##1{\textcolor[rgb]{0.00,0.50,0.00}{##1}}}
\expandafter\def\csname PY@tok@fm\endcsname{\def\PY@tc##1{\textcolor[rgb]{0.00,0.00,1.00}{##1}}}
\expandafter\def\csname PY@tok@vc\endcsname{\def\PY@tc##1{\textcolor[rgb]{0.10,0.09,0.49}{##1}}}
\expandafter\def\csname PY@tok@vg\endcsname{\def\PY@tc##1{\textcolor[rgb]{0.10,0.09,0.49}{##1}}}
\expandafter\def\csname PY@tok@vi\endcsname{\def\PY@tc##1{\textcolor[rgb]{0.10,0.09,0.49}{##1}}}
\expandafter\def\csname PY@tok@vm\endcsname{\def\PY@tc##1{\textcolor[rgb]{0.10,0.09,0.49}{##1}}}
\expandafter\def\csname PY@tok@sa\endcsname{\def\PY@tc##1{\textcolor[rgb]{0.73,0.13,0.13}{##1}}}
\expandafter\def\csname PY@tok@sb\endcsname{\def\PY@tc##1{\textcolor[rgb]{0.73,0.13,0.13}{##1}}}
\expandafter\def\csname PY@tok@sc\endcsname{\def\PY@tc##1{\textcolor[rgb]{0.73,0.13,0.13}{##1}}}
\expandafter\def\csname PY@tok@dl\endcsname{\def\PY@tc##1{\textcolor[rgb]{0.73,0.13,0.13}{##1}}}
\expandafter\def\csname PY@tok@s2\endcsname{\def\PY@tc##1{\textcolor[rgb]{0.73,0.13,0.13}{##1}}}
\expandafter\def\csname PY@tok@sh\endcsname{\def\PY@tc##1{\textcolor[rgb]{0.73,0.13,0.13}{##1}}}
\expandafter\def\csname PY@tok@s1\endcsname{\def\PY@tc##1{\textcolor[rgb]{0.73,0.13,0.13}{##1}}}
\expandafter\def\csname PY@tok@mb\endcsname{\def\PY@tc##1{\textcolor[rgb]{0.40,0.40,0.40}{##1}}}
\expandafter\def\csname PY@tok@mf\endcsname{\def\PY@tc##1{\textcolor[rgb]{0.40,0.40,0.40}{##1}}}
\expandafter\def\csname PY@tok@mh\endcsname{\def\PY@tc##1{\textcolor[rgb]{0.40,0.40,0.40}{##1}}}
\expandafter\def\csname PY@tok@mi\endcsname{\def\PY@tc##1{\textcolor[rgb]{0.40,0.40,0.40}{##1}}}
\expandafter\def\csname PY@tok@il\endcsname{\def\PY@tc##1{\textcolor[rgb]{0.40,0.40,0.40}{##1}}}
\expandafter\def\csname PY@tok@mo\endcsname{\def\PY@tc##1{\textcolor[rgb]{0.40,0.40,0.40}{##1}}}
\expandafter\def\csname PY@tok@ch\endcsname{\let\PY@it=\textit\def\PY@tc##1{\textcolor[rgb]{0.25,0.50,0.50}{##1}}}
\expandafter\def\csname PY@tok@cm\endcsname{\let\PY@it=\textit\def\PY@tc##1{\textcolor[rgb]{0.25,0.50,0.50}{##1}}}
\expandafter\def\csname PY@tok@cpf\endcsname{\let\PY@it=\textit\def\PY@tc##1{\textcolor[rgb]{0.25,0.50,0.50}{##1}}}
\expandafter\def\csname PY@tok@c1\endcsname{\let\PY@it=\textit\def\PY@tc##1{\textcolor[rgb]{0.25,0.50,0.50}{##1}}}
\expandafter\def\csname PY@tok@cs\endcsname{\let\PY@it=\textit\def\PY@tc##1{\textcolor[rgb]{0.25,0.50,0.50}{##1}}}

% for compatibility with earlier versions

\makeatother

    % Exact colors from NB
    \definecolor{incolor}{rgb}{0.0, 0.0, 0.5}
    \definecolor{outcolor}{rgb}{0.545, 0.0, 0.0}

    % Prevent overflowing lines due to hard-to-break entities
    \sloppy 
    % Setup hyperref package
    \hypersetup{
      breaklinks=true,  % so long urls are correctly broken across lines
      colorlinks=true,
      urlcolor=urlcolor,
      linkcolor=linkcolor,
      citecolor=citecolor,
      }
    % Slightly bigger margins than the latex defaults
    
    \geometry{verbose,tmargin=1in,bmargin=1in,lmargin=1in,rmargin=1in}

\begin{document}
    
    \maketitle

	\begin{abstract}

The article describes a global and arbitrage-free parametrization of the eSSVI surfaces introduced by Hendriks and Martini in 2019. A robust calibration of such surfaces has already been proposed by the quantitative research team at Zeliade in 2019, but it is sequential in expiries and lacks of a global view on the surface. The alternative calibration suggested in this article is faster and always guarantees an arbitrage-free fit of market data.

\end{abstract}

	\hypertarget{introduction}{%
\section{Introduction}\label{introduction}}

	The \emph{Stochastic Volatility Inspired} model for the total implied
variance proposed by Gatheral at the Global Derivatives conference in
Madrid in 2004 \cite{gatheral2004parsimonious} fits notably well market
data and, even restricting the \(5\) parameters domain to a Butterfly
arbitrage-free domain (as characterized by Martini and Mingone in
\cite{martini2021no}), the model still guarantees accurate smile
fitting. However, conditions for the absence of Calendar spread
arbitrage arising when smiles of different maturities are glued into a
continuous surface are still not known and this limits the practical
applications of this model.

Different sub-families of the SVI model have been studied but one in
particular has found interest in literature and industry, since it is
the extension of the SVI model to surfaces: the \emph{Surface SVI} model
introduced by Gatheral and Jacquier in \cite{gatheral2014arbitrage},
with implied total variance of the form \begin{equation}\label{eqSSVI}
w(k,T) = \frac{\theta(T)}{2}\bigl(1+\rho\varphi(T) k + \sqrt{(\varphi(T) k + \rho)^2 + (1-\rho^2)}\bigr).
\end{equation} The model has 3 parameters \((\theta, \rho, \varphi)\)
for each slice \(k\to\omega(k)\):

\begin{itemize}
\tightlist
\item
  \(\theta\) is the At-The-Money (ATM) total implied variance;
\item
  \(\rho\) is the correlation parameter, proportional to the slope of
  the smile at the ATM point;
\item
  \(\varphi\) is proportional to the ATM curvature.
\end{itemize}

The authors formulated conditions for the surface to be arbitrage-free,
in particular they found sufficient no Butterfly arbitrage conditions
(for a fixed slice) and, requiring the correlation parameter to be
constant, no Calendar Spread arbitrage conditions (among different
slices).

	Even though these surfaces perform well and are of easy implementation,
their calibration performances are not always satisfactory in practice
because of the constraint of the constant correlation parameter.
Hendriks and Martini \cite{hendriks2019extended} managed to extend the
model by formulating no Calendar Spread conditions between two SSVI
slices with different \(\rho\)s and generalized the conditions to a
continuous surface, giving birth to the \emph{extended SSVI}. The
quantitative research team at Zeliade has then proposed a convenient way
to robustly calibrate eSSVI surfaces in \cite{corbetta2019robust}. This
calibration procedure can be summarized in a nutshell as follows:

\begin{enumerate}
\def\labelenumi{\arabic{enumi}.}
\tightlist
\item
  Consider the SSVI model \cref{eqSSVI} by Gatheral and Jacquier to
  model implied variance slices at fixed maturity;
\item
  Associate to each available option maturity on the market a
  \emph{slice} \((\theta,\rho,\varphi)\) of SSVI parameters (not the
  same parameters for all slices), fulfilling the sufficient no
  Butterfly arbitrage condition obtained by Gatheral and Jacquier
  (\cite{gatheral2014arbitrage} Theorem 4.2);
\item
  Calibrate the SSVI parameters of each slice sequentially with respect
  to option maturities, in a way that ensures the absence of Calendar
  Spread arbitrage as formulated by Hendriks and Martini
  (\cite{hendriks2019extended} Proposition 3.5);
\item
  Interpolate/extrapolate \emph{linearly} the parameters; it can be
  proven that this eventually produces an arbitrage-free surface.
\end{enumerate}

All in all, such a calibrated eSSVI is parametrized by a set of
\(3 \times N\) parameters where \(N\) is the set of maturities to which
the model is calibrated, while SSVI would have \(2\times N+1\)
parameters.

The eSSVI calibration takes as underlying hypothesis the market
availability of strikes and volatilities neat the ATM point for each
slice, since total variance is constrained to go through these points.
The methodology works efficiently when market is very liquid, but when
it is not the case, approximations of the ATM volatility could cause
discrepancies between the calibrated model and real data. Furthermore,
the arbitrage-free bounds are re-written based on a first order
approximation, and they may not guarantee absence of arbitrage in case
of illiquid markets.

These observations naturally bring the necessity of a more generic and
\emph{global} procedure, which is here attained with a new calibration
algorithm, based on a re-parametrization of the no-arbitrage domain of
eSSVI as a product of intervals, suitable to be crunched by an
optimization algorithm. Parameters calibration is no more performed
sequentially slice by slice but globally on all slices. Hence, this is
why we dub \emph{Global eSSVI} the new model.

	In the present work, we firstly analyze the conditions of absence of
Calendar spread and Butterfly arbitrage for the SSVI model in
\cref{no-arbitrage-conditions-for-the-essvi-model}. Then, in
\cref{a-global-parametrization}, we define the new global
parametrization and prove it automatically satisfies the above
conditions. In \cref{calibration-strategy} we detail a calibration
algorithm and show calibration results. Numerical results are on data
from the Israel index TA35 and the exchange rate NIS/USD between the
Israeli shekel and the American dollar and they are the outcome of a
collaboration between the research team at Zeliade and Tel Aviv Stock
Exchange (TASE), to calibrate end-of-day implied volatility surfaces
with no arbitrage. We compare these results with calibration results
output by another model (that we dub Carr-Pelts-Tehranchi and present in
\cref{the-carr-pelts-tehranchi-model}). Finally, we describe a way to
interpolate and extrapolate SSVI parameters between different
maturities, in \cref{interpolation-and-extrapolation}.

	\hypertarget{no-arbitrage-conditions-for-the-essvi-model}{%
\section{No arbitrage conditions for the eSSVI
model}\label{no-arbitrage-conditions-for-the-essvi-model}}

	The SSVI model introduced by Gatheral and Jacquier
\cite{gatheral2014arbitrage} is a model for the implied total variance
\(\sigma^2_{\text{imp}}(K,T)T\) where \(\sigma_{\text{imp}}(T,K)\) is
the implied volatility for a vanilla option with strike \(K\) and
maturity \(T\), see \cref{eqSSVI}. It has been extended to the eSSVI
model by Hendriks and Martini \cite{hendriks2019extended}:
\[\text{eSSVI}(K,T) = \frac{\theta(T)}{2}\bigl(1+\rho(T)\varphi(T) k + \sqrt{(\varphi(T) k + \rho(T))^2 + (1-\rho(T)^2)}\bigr),\]
where \(k\) is the log-forward moneyness \(k=\log\frac{K}{F_0(T)}\) and
\(F_0(T)\) is the forward. From this formula, the option prices can be
recovered with the classic Black-Scholes formula \begin{align*}
C(K,T) &= D_0(T)(F_0(T)\Phi(d_1) - K\Phi(d_2)),\\
d_1 &= \frac{\sigma_{\text{imp}}(K,T)\sqrt{T}}2 - \frac{k}{\sigma_{\text{imp}}(K,T)\sqrt{T}},\\
d_2 &= d_1-\sigma_{\text{imp}}(K,T)\sqrt{T},
\end{align*} setting
\(\sigma_{\text{imp}}(K,T) = \sqrt{\frac{\text{eSSVI}(K,T)}T}\), where
\(D_0(T)\) is the discount factor and \(\Phi\) the cumulative standard
normal distribution function.

Setting \(\psi:=\theta\varphi\), the eSSVI formula in terms of the
parameters \((\theta,\rho,\psi)\) becomes
\[\text{eSSVI}(K,T) = \frac{1}{2}\bigl(\theta(T)+\rho(T)\psi(T) k + \sqrt{(\psi(T) k + \theta(T)\rho(T))^2 + \theta(T)^2(1-\rho(T)^2)}\bigr).\]

	\hypertarget{calendar-spread-arbitrage}{%
\subsection{Calendar spread arbitrage}\label{calendar-spread-arbitrage}}

	A Calendar spread arbitrage occurs when there is certainty of not losing
money and there is a non-null possibility of obtaining a positive payoff
holding two Calls with different maturities and same moneyness. In
general, the absence of such arbitrage is guaranteed by the requirement
for the Call price function to be non-decreasing in time-to-maturity for
fixed moneyness.

	Given two maturities with SSVI parameters \((\rho_1,\theta_1,\psi_1)\)
for the first and \((\rho_2,\theta_2,\psi_2)\) for the second, the no
Calendar spread arbitrage conditions have been characterized by Hendriks
and Martini in Proposition 3.5 of \cite{hendriks2019extended}:

\begin{itemize}
\tightlist
\item
  necessary conditions: \(\theta_2>\theta_1\);
  \(\psi_2>\psi_1\max\bigl(\frac{1+\rho_1}{1+\rho_2},\frac{1-\rho_1}{1-\rho_2}\bigr)\geq\psi_1\);
\item
  sufficient conditions: the necessary conditions above and
  \(\psi_2\leq\frac{\psi_1}{\theta_1}\theta_2\) (or
  \(\bigl(\rho_1-\frac{\psi_2}{\psi_1}\rho_2\bigr)^2\leq\bigl(\frac{\theta_2}{\theta_1}-1\bigr)\bigl(\frac{\psi_2^2\theta_1}{\psi_1^2\theta_2}-1\bigr)\)).
\end{itemize}

We will consider the condition
\(\psi_2\leq\frac{\psi_1}{\theta_1}\theta_2\) rather than
\(\bigl(\rho_1-\frac{\psi_2}{\psi_1}\rho_2\bigr)^2\leq\bigl(\frac{\theta_2}{\theta_1}-1\bigr)\bigl(\frac{\psi_2^2\theta_1}{\psi_1^2\theta_2}-1\bigr)\),
since it is more tractable and a natural candidate for a global
parametrization. We leave anyways open the possibility of studying a new
parametrization using the second condition.

	\hypertarget{butterfly-arbitrage}{%
\subsection{Butterfly arbitrage}\label{butterfly-arbitrage}}

	A Butterfly arbitrage arises when it is possible to make an arbitrage
from a portfolio composed of three Call options having same maturity but
different strikes. It is well-known that the absence of Butterfly
arbitrage is guaranteed if and only if the Call price function coming
from the model is convex and bounded between the discounted Call payoff
evaluated at the forward value \(D_0(T)(F_0(T)-K)^+\) and the discounted
forward \(D_0(T)F_0(T)\).

	In the case of Call prices obtained injecting an implied volatility in
the Black-Scholes formula, to avoid Butterfly arbitrage it is sufficient
to satisfy the requirement of convexity.

Given a maturity with SSVI parameters \((\rho,\theta,\psi)\), the
necessary and sufficient no Butterfly arbitrage conditions have been
described in \cite{martini2021explicit} and will be explained in
\cref{the-martini-mingone-mm-no-butterfly-arbitrage-necessary-and-sufficient-conditions}.
For efficiency reasons, it is also possible to consider a set of
sufficient but not necessary no arbitrage conditions, which are easier
to compute and implement and will be presented in
\cref{the-gatheral-jacquier-gj-no-butterfly-arbitrage-sufficient-conditions}.

	\hypertarget{the-gatheral-jacquier-gj-no-butterfly-arbitrage-sufficient-conditions}{%
\subsubsection{The Gatheral-Jacquier (GJ) no Butterfly arbitrage
sufficient
conditions}\label{the-gatheral-jacquier-gj-no-butterfly-arbitrage-sufficient-conditions}}

	These conditions are so named since they consist of the sufficient (but
not necessary) no Butterfly arbitrage conditions by Gatheral and
Jacquier \cite{gatheral2014arbitrage} in Theorem 4.2, together with the
necessary (but not sufficient) asymptotic conditions related to the
Roger Lee Moment formula \cite{lee2004moment}:

\begin{itemize}
\tightlist
\item
  necessary conditions: \(\psi\leq\frac{4}{1+|\rho|}\);
\item
  sufficient conditions: the necessary conditions above and
  \(\psi^2\leq \frac{4\theta}{1+|\rho|}:=\mathfrak f_{GJ}(\theta,|\rho|)\).
\end{itemize}

	\hypertarget{the-martini-mingone-mm-no-butterfly-arbitrage-necessary-and-sufficient-conditions}{%
\subsubsection{The Martini-Mingone (MM) no Butterfly arbitrage necessary
and sufficient
conditions}\label{the-martini-mingone-mm-no-butterfly-arbitrage-necessary-and-sufficient-conditions}}

	The name of these conditions comes from the explicit no Butterfly
arbitrage conditions in Proposition 6.3 of Martini and Mingone
\cite{martini2021explicit}. In the article, the notation used is the SVI
one, with implied total variance
\[\text{SVI}(k) = a+b(\rho(k-m)+\sqrt{(k-m)^2+\sigma^2}).\]

The SVI parameters are mapped to the SSVI parameters through
\begin{align*}
&a = \frac{\theta(1-\rho^2)}{2}, & & b=\frac{\theta\varphi}2 = \frac{\psi}2,\\
&m = -\frac{\rho}{\varphi} = -\frac{\theta\rho}{\psi}, & & \sigma=\frac{\sqrt{1-\rho^2}}{\varphi} = \frac{\theta\sqrt{1-\rho^2}}{\psi},
\end{align*} and viceversa \begin{align*}
\varphi = \frac{\sqrt{1-\rho^2}}{\sigma}, \quad \theta=\frac{2b\sigma}{\sqrt{1-\rho^2}}
\end{align*} so that \(\psi = 2b\).

The authors show that an SSVI with \(b(1+|\rho|)\leq2\), corresponding
to \(\psi\leq\frac{4}{1+|\rho|}\), automatically satisfies the necessary
Fukasawa conditions of monotonicity of the functions
\[k\to f_{1,2}(k): = \frac{k}{\sqrt{\text{SSVI}(k)}}\mp\frac{\sqrt{\text{SSVI}(k)}}2,\]
see \cite{fukasawa2012normalizing}. Then, the only additional required
condition is \begin{equation}\label{eq_sigma_MM}
\sigma\geq -\frac{bg_2(l,|\rho|)}{2(h^2(l,|\rho|)-b^2g^2(l,|\rho|))}
\end{equation} for every
\(l>l_2(|\rho|) = \tan\bigl(\frac{\arccos(-|\rho|)}{3}\bigr)^{-1}\),
where \begin{align*}
g(l,\rho) &= \frac{N'(l,\rho)}4,\\
h(l,\rho) &= 1-\Bigl(l-\frac{\rho}{\sqrt{1-\rho^2}}\Bigr)\frac{N'(l,\rho)}{2N(l,\rho)},\\
g_2(l,\rho) &= N''(l,\rho) - \frac{N'(l,\rho)^2}{2N(l,\rho)},\\
N(l,\rho) &= \sqrt{1-\rho^2}+\rho l +\sqrt{l^2+1},
\end{align*} and derivatives are taken with respect to \(l\). The
denominator in \cref{eq_sigma_MM} for \(\sigma\) is positive thanks to
the Fukasawa conditions, so that using the SSVI parameters, the
inequality becomes
\((\theta\sqrt{1-\rho^2}g^2(l,|\rho|)-g_2(l,|\rho|))\psi^2\leq 4\theta\sqrt{1-\rho^2} h^2(l,|\rho|)\),
and since \(g_2\) is negative in the considered domain for \(l\), the
inequality can be written as
\[\psi^2\leq \frac{4\theta\sqrt{1-\rho^2}h^2(l;|\rho|)}{\theta\sqrt{1-\rho^2}g^2(l;|\rho|)-g_2(l;|\rho|)}\]
for all \(l>l_2(|\rho|)\). Eventually, the necessary and sufficient no
Butterfly arbitrage conditions for SSVI read \begin{align*}
&\psi\leq\frac{4}{1+|\rho|},\\
&\psi^2 \leq \inf_{l>l_2(|\rho|)}\frac{4\theta\sqrt{1-\rho^2}h^2(l;|\rho|)}{\theta\sqrt{1-\rho^2}g^2(l;|\rho|)-g_2(l;|\rho|)} := \mathfrak f_{MM}(\theta,|\rho|).\\
\end{align*}

	\hypertarget{final-conditions}{%
\subsection{Final conditions}\label{final-conditions}}

	All in all, the Calendar spread and Butterfly constraints for successive
SSVI slices \((\rho_1,\theta_1,\psi_1)\) and
\((\rho_2,\theta_2,\psi_2)\) can be summed up to:
\begin{align}\label{eqArbitrages}
\begin{split}
& \theta_2>\theta_1>0,\\
& \psi_1 \leq \min\biggl(\frac{4}{1+|\rho_1|}, \sqrt{\mathfrak f(\theta_1,|\rho_1|)}\biggr),\\
& 0<\psi_1\max\biggl(\frac{1+\rho_1}{1+\rho_2},\frac{1-\rho_1}{1-\rho_2}\biggr)< \psi_2 \leq \min\biggl(\frac{\psi_1}{\theta_1}\theta_2, \frac{4}{1+|\rho_2|}, \sqrt{\mathfrak f(\theta_2,|\rho_2|)}\biggr)
\end{split}
\end{align} where the function \(\mathfrak f\) can be either from the MM
model (\(\mathfrak f=\mathfrak f_{MM}\)) or from the GJ model
(\(\mathfrak f=\mathfrak f_{GJ}\)).

Since the MM conditions are less strict then the GJ conditions, it could
seem natural to implement the former in a calibration routine. However,
in contrast with the latter ones, they are not explicit and require to
use a minimization algorithm to evaluate
\(\mathfrak f_{MM}(\theta,|\rho|)\), causing an increase in calibration
time.

	\hypertarget{a-global-parametrization}{%
\section{A global parametrization}\label{a-global-parametrization}}

	\hypertarget{the-case-with-two-maturities}{%
\subsection{The case with two
maturities}\label{the-case-with-two-maturities}}

	Let us consider the model with only two SSVI slices and in particular
the conditions on the second maturity parameters. In order to achieve
the condition on \(\theta_2\), we could write
\(\theta_2 = \theta_1 + \tilde a_2\) and choose \(\tilde a_2>0\). The
condition on \(\psi_2\) requires that \(\psi_2\) lies in an interval
\(]A_{\psi_2},C_{\psi_2}[\) where \begin{align*}
&A_{\psi_2} := \psi_1 p_2,\\
&C_{\psi_2} := \min\biggl(\frac{\psi_1}{\theta_1}\theta_2, f_2\biggr),\\
&p_2 := \max\biggl(\frac{1+\rho_1}{1+\rho_2},\frac{1-\rho_1}{1-\rho_2}\biggr),\\
&f_2 := \min\biggl(\frac{4}{1+|\rho_2|}, \sqrt{\mathfrak f(\theta_2,|\rho_2|)}\biggr),
\end{align*} so that setting
\(\psi_2 = c_2(C_{\psi_2}-A_{\psi_2})+A_{\psi_2}\) with \(c_2\in]0,1[\)
would guarantee the absence of arbitrage, if indeed
\(C_{\psi_2}> A_{\psi_2}\). However, the requirement
\(C_{\psi_2}> A_{\psi_2}\) is not automatically guaranteed and it
depends on the conditions of the first maturity parameters. Indeed, one
needs \(p_2<\frac{\theta_2}{\theta_1}\) and \(\psi_1 p_2<f_2\). The
first requirement is easily attained setting
\(\theta_2 = p_2\theta_1 + a_2\) with \(a_2>0\), while the second
requirement is guaranteed if and only if in the calibration of
\(\psi_1\) we also impose \(\psi_1<\frac{f_2}{p_2}\). More specifically,
\(\psi_1\) can be calibrated as \(\psi_1 = c_1 C_{\psi_1}\) where
\(c_1\in]0,1[\) and
\[C_{\psi_1} := \min\biggl(\frac{4}{1+|\rho_1|}, \sqrt{\mathfrak f(\theta_1,|\rho_1|)}, \frac{f_2}{p_2}\biggr).\]

	\hypertarget{the-general-case}{%
\subsection{The general case}\label{the-general-case}}

	The Global eSSVI is a model which uses a new global parametrization for
a set of consecutive SSVI slices, satisfying the above no arbitrage
conditions. Given \(N\) maturities, the parametrization involves
\(3\times N\) parameters as the eSSVI classical one (since it is a more
practical re-parametrization of it).

	The new parameters are \begin{equation}\label{eqProdIntervals}
\rho_1,\dots,\rho_N,\theta_1,a_2,\dots,a_N,c_1,\dots,c_N\in]-1,1[^N\times]0,\infty[^N\times]0,1[^N
\end{equation} where the \(\rho_i\) are the original eSSVI parameters
while the \(a_i\) and \(c_i\) are defined as \begin{align*}
&a_i = \theta_i-\theta_{i-1}p_i,\\
&c_i = \frac{\psi_i-A_{\psi_i}}{C_{\psi_i}-A_{\psi_i}},
\end{align*} and \begin{equation}\label{eqAuxParam}
\begin{aligned}
& p_i := \max\biggl(\frac{1+\rho_{i-1}}{1+\rho_i},\frac{1-\rho_{i-1}}{1-\rho_i}\biggr) &\ \text{if $i>1$},\\
& f_i := \min\biggl(\frac{4}{1+|\rho_i|}, \sqrt{\mathfrak f(\theta_i,|\rho_i|)}\biggr) &\ \text{if $i\geq 1$},\\
& A_{\psi_1} := 0 &\ \text{if $i=1$},\\
& A_{\psi_i} := \psi_{i-1}p_i &\ \text{if $i>1$},\\
& C_{\psi_1}:=\min\biggl(f_1,\frac{f_2}{p_2},\dots,\frac{f_N}{\prod_{j=2}^Np_j}\biggr) &\ \text{if $i=1$},\\
& C_{\psi_i}:=\min\biggl(\frac{\psi_{i-1}}{\theta_{i-1}}\theta_i,f_i,\frac{f_{i+1}}{p_{i+1}},\dots,\frac{f_N}{\prod_{j={i+1}}^Np_j}\biggr) &\ \text{if $i>1$}.
\end{aligned}
\end{equation}

The original SSVI parameters are sequentially obtained through the
ordered relations \begin{equation}\label{eqSSVItoGlobal}
\begin{aligned}
&\theta_2 = \theta_1p_2+a_2,\quad \dots,\quad \theta_N=\theta_{N-1}p_N+a_N,\\
&\psi_1 = c_1(C_{\psi_1}-A_{\psi_1})+A_{\psi_1},\quad \dots,\quad \psi_N = c_N(C_{\psi_N}-A_{\psi_N})+A_{\psi_N}.
\end{aligned}
\end{equation}

	It can be useful to note that:

\begin{itemize}
\tightlist
\item
  \(p_i \geq 1\) with \(p_i=1\) iff \(\rho_i=\rho_{i-1}\);
\item
  \(0<f_i \leq 4\).
\end{itemize}

	The following result guarantees that the Global eSSVI parametrization
has no arbitrage.

	\begin{proposition}

For any integer $N>0$ and any parameters
$$\rho_1,\dots,\rho_N,\theta_1,a_2,\dots,a_N,c_1,\dots,c_N\in]-1,1[^N\times]0,\infty[^N\times]0,1[^N,$$
the set of $N$ SSVI slices
$$\text{eSSVI}_i(K,T_i) = \frac{1}{2}\bigl(\theta_i+\rho_i\psi_i k + \sqrt{(\psi_i k + \theta_i\rho_i)^2 + \theta_i^2(1-\rho_i^2)}\bigr),$$
with today's forward $F_0(T_i)$ for increasing maturities $T_i$, log-forward moneyness $k=\log\frac{K}{F_0(T_i)}$ and parameters $(\theta_i,\rho_i,\psi_i)$ defined through \cref{eqAuxParam} and \cref{eqSSVItoGlobal}, is free of Butterfly and Calendar spread arbitrage.

\end{proposition}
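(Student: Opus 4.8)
The plan is to verify, by induction on $i$, that the triples $(\theta_i,\rho_i,\psi_i)$ produced by \cref{eqAuxParam}--\cref{eqSSVItoGlobal} satisfy the full list of inequalities \cref{eqArbitrages} for every consecutive pair of slices, and then to invoke the no-arbitrage results recalled in \cref{final-conditions} (the Hendriks--Martini Calendar spread conditions together with the GJ/MM Butterfly conditions). The only genuinely non-trivial point is that the open interval $]A_{\psi_i},C_{\psi_i}[$ into which $\psi_i$ is mapped is non-empty; once $A_{\psi_i}<C_{\psi_i}$ is known, membership $\psi_i\in]A_{\psi_i},C_{\psi_i}[$ is immediate from $c_i\in]0,1[$ and $\psi_i=c_i(C_{\psi_i}-A_{\psi_i})+A_{\psi_i}$, and each individual inequality in \cref{eqArbitrages} is then simply one of the clauses built into the definitions of $A_{\psi_i}$, $C_{\psi_i}$, $f_i$ or $\theta_i$.

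I would first dispatch the $\theta$-chain. From $\theta_1>0$, $p_i\ge 1$ and $a_i>0$ one gets $\theta_i=\theta_{i-1}p_i+a_i>\theta_{i-1}p_i\ge\theta_{i-1}>0$, so the $\theta_i$ are positive and strictly increasing; moreover $\theta_i/\theta_{i-1}>p_i$, which (using $\psi_{i-1}>0$, established below) is precisely the strict inequality $\psi_{i-1}p_i<\frac{\psi_{i-1}}{\theta_{i-1}}\theta_i$, i.e. the first term of $C_{\psi_i}$ exceeds $A_{\psi_i}$.

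The heart of the argument is then the induction showing $A_{\psi_i}<C_{\psi_i}$. For $i=1$, $A_{\psi_1}=0$ while $C_{\psi_1}$ is a minimum of the strictly positive quantities $f_\ell/\prod_{j=2}^{\ell}p_j$, so the claim holds and $\psi_1=c_1C_{\psi_1}\in]0,C_{\psi_1}[$. For $i>1$, assume $\psi_{i-1}\in]A_{\psi_{i-1}},C_{\psi_{i-1}}[$ and $\psi_{i-1}>0$. The key structural remark is that the definition of $C_{\psi_{i-1}}$ already contains, for every $\ell\in\{i,\dots,N\}$, the term $f_\ell/\prod_{j=i}^{\ell}p_j$; hence $\psi_{i-1}<C_{\psi_{i-1}}$ yields $\psi_{i-1}\prod_{j=i}^{\ell}p_j<f_\ell$, equivalently (dividing by $\prod_{j=i+1}^{\ell}p_j\ge 1$) $A_{\psi_i}=\psi_{i-1}p_i<f_\ell/\prod_{j=i+1}^{\ell}p_j$ for every $f$-term of $C_{\psi_i}$, the case $\ell=i$ giving $A_{\psi_i}<f_i$. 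Combined with $A_{\psi_i}<\frac{\psi_{i-1}}{\theta_{i-1}}\theta_i$ from the previous paragraph, this gives $A_{\psi_i}<C_{\psi_i}$, hence $\psi_i\in]A_{\psi_i},C_{\psi_i}[$ and $\psi_i>A_{\psi_i}\ge 0$, closing the induction.

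Finally I would read off \cref{eqArbitrages} for the pair $(i-1,i)$: $\theta_i>\theta_{i-1}>0$ from the $\theta$-chain; $0<\psi_{i-1}\max\!\bigl(\tfrac{1+\rho_{i-1}}{1+\rho_i},\tfrac{1-\rho_{i-1}}{1-\rho_i}\bigr)=A_{\psi_i}<\psi_i$ for the Calendar spread lower bound; $\psi_i<C_{\psi_i}\le\frac{\psi_{i-1}}{\theta_{i-1}}\theta_i$ for the Calendar spread upper bound; and $\psi_i<C_{\psi_i}\le f_i=\min\!\bigl(\tfrac{4}{1+|\rho_i|},\sqrt{\mathfrak f(\theta_i,|\rho_i|)}\bigr)$ for the Butterfly bounds on slice $i$ (and $\psi_1<C_{\psi_1}\le f_1$ for slice $1$). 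Thus \cref{eqArbitrages} holds for every consecutive pair, so by the cited results each slice is Butterfly-free and every consecutive pair is Calendar spread-free; since Calendar spread freedom between $T_{i-1}$ and $T_i$ amounts to the pointwise inequality $w_{i-1}\le w_i$, it chains transitively to all pairs $T_i<T_j$, which completes the proof. The main obstacle — and the reason $C_{\psi_i}$ must carry the whole tail $f_{i+1}/p_{i+1},\dots,f_N/\prod_{j=i+1}^{N}p_j$ rather than merely $f_i$ — is exactly this telescoping bookkeeping: at step $i-1$ one must reserve enough headroom below every future $f_\ell$ so that multiplying by the successive factors $p_j\ge 1$ never overshoots.
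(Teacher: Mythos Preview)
Your proposal is correct and follows essentially the same approach as the paper: both argue by induction that $A_{\psi_i}<C_{\psi_i}$, using the $\theta$-chain to handle the $\frac{\psi_{i-1}}{\theta_{i-1}}\theta_i$ term and the fact that $C_{\psi_{i-1}}$ already contains the tail $f_\ell/\prod_{j=i}^{\ell}p_j$ to handle the remaining terms. Your write-up is in fact slightly more complete than the paper's, since you explicitly note the transitivity of the Calendar spread ordering across non-consecutive maturities and spell out why the tail must appear in each $C_{\psi_i}$.
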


	\begin{proof}
We now prove that with such parametrization the arbitrage constraints in \cref{eqArbitrages} are satisfied. Indeed,
\begin{itemize}
\item $\theta_1>0$ is chosen from the start, and $\theta_i=\theta_{i-1}p_i+a_i>\theta_{i-1}p_i\geq\theta_{i-1}$;
\item $\psi_1=c_1C_{\psi_1}>0$ and $\psi_1\leq c_1f_1<f_1$ (by assumption on the domain of $c_1$), which is the no Butterfly arbitrage for the first maturity.
\end{itemize}

We need to show that if $\psi_i=c_i(C_{\psi_i}-A_{\psi_i})+A_{\psi_i}$, then
\begin{equation*}
A_{\psi_i}<\psi_i\leq \min\biggl(\frac{\psi_{i-1}}{\theta_{i-1}}\theta_i,f_i\biggr).
\end{equation*}
We have already seen that this holds true for $i=1$, where it holds even that $A_{\psi_1}<\psi_1< C_{\psi_1}$. We now show by induction that if $A_{\psi_{i-1}}<\psi_{i-1}<C_{\psi_{i-1}}$ then $A_{\psi_i}<C_{\psi_i}$. This will ensure the no arbitrage condition, since then, by definition, $\psi_i=c_i(C_{\psi_i}-A_{\psi_i})+A_{\psi_i}$ so that $A_{\psi_i}<\psi_i< C_{\psi_i}$ and also $C_{\psi_i}=\min\Bigl(\frac{\psi_{i-1}}{\theta_{i-1}}\theta_i,f_i,\frac{f_{i+1}}{p_{i+1}},\dots,\frac{f_N}{\prod_{j={i+1}}^Np_j}\Bigr)\leq\min\bigl(\frac{\psi_{i-1}}{\theta_{i-1}}\theta_i,f_i\bigr)$.

For induction, we suppose we have proven $A_{\psi_{i-1}}<C_{\psi_{i-1}}$. If $C_{\psi_i}=\frac{\psi_{i-1}}{\theta_{i-1}}\theta_i$, the requirement $\theta_i>\theta_{i-1}p_i$ implies $A_{\psi_i}=\psi_{i-1}p_i<C_{\psi_i}$. Otherwise, the inequality holds true iff $\psi_{i-1}<\frac{1}{p_i}\min\bigl(f_i,\frac{f_{i+1}}{p_{i+1}},\dots,\frac{f_N}{\prod_{j={i+1}}^Np_j}\bigr)$. By the above consequence of the induction hypothesis,
\begin{align*}
\psi_{i-1}&<C_{\psi_{i-1}} = \min\biggl(\frac{\psi_{i-2}}{\theta_{i-2}}\theta_{i-1},f_{i-1},\frac{f_{i}}{p_{i}},\dots,\frac{f_N}{\prod_{j={i}}^Np_j}\biggr)\\
&\leq\min\biggl(\frac{f_i}{p_i},\frac{f_{i+1}}{p_ip_{i+1}},\dots,\frac{f_N}{\prod_{j={i}}^Np_j}\biggr) \\
&=\frac{1}{p_i}\min\biggl(f_i,\frac{f_{i+1}}{p_{i+1}},\dots,\frac{f_N}{\prod_{j={i+1}}^Np_j}\biggr)
\end{align*}
and this concludes the proof.
\end{proof}

	\hypertarget{calibration-strategy}{%
\section{Calibration strategy}\label{calibration-strategy}}

	The Global eSSVI parametrization can be easily implemented for
calibration purposes. The calibration function can either target the
market total implied variance or the market option prices. Indeed, model
prices can be easily recovered through the Black-Scholes formula, given
implied volatility \(\sqrt{\frac{\text{eSSVI}(K,T)}T}\).

We performed tests on real market data on two different assets: the
Israel index TA35 and the exchange rate NIS/USD between the Israeli
shekel and the American dollar. We describe the procedure which led to
very satisfactory calibration results.

	\hypertarget{routine}{%
\subsection{Routine}\label{routine}}

	The eSSVI calibration consists of finding the parameters
\(\{\rho_i\}_{i=1}^N,\theta_1,\{a_i\}_{i=2}^N,\{c_i\}_{i=1}^N\) such
that the model eSSVI prices best match the basket of available option
prices. In our experiments, parameters are chosen to minimize the
possibly weighted squared difference between market option prices
\(\tilde C(K,T)\) and model option prices \(C(K,T)\):
\[\sum_{K, T}\bigl(\tilde C(K,T)-C(K,T)\bigr)^2\omega(K,T).\] Weights
\(\omega\) can be arbitrarily chosen by the user. In particular, they
can be chosen to match the inverse of squared market Black-Scholes
vegas, in order to have a calibration in implied volatilities at the
first order, instead of a calibration in prices.

The calibration is performed using the \verb|least_squares| function in
the \verb|scipy.optimize| library. The maximum number of function
evaluations is set at \(1000\) and the argument for convergence
\verb|ftol| is set at its default value \(10^{-8}\).

	The algorithm works as follows:

\begin{enumerate}
\def\labelenumi{\arabic{enumi}.}
\tightlist
\item
  input:
  \(\{\rho_i\}_{i=1}^N,\theta_1,\{a_i\}_{i=2}^N,\{c_i\}_{i=1}^N\);
\item
  with input parameters \(\rho_1,\dots,\rho_N\) (they are also model
  parameters) compute the intermediate quantities \(p_2,\dots,p_N\) in
  \cref{eqAuxParam};
\item
  with input parameters \(\theta_1,a_2,\dots,a_N\) compute from
  \cref{eqSSVItoGlobal} and step 2. the model parameters
  \(\theta_2,\dots,\theta_N\);
\item
  compute from the \(\theta_i\) and the \(\rho_i\) the intermediate
  quantities \(f_1,\dots,f_N\);
\item
  compute the intermediate quantity \(C_{\psi_1}\);
\item
  with input parameter \(c_1\) compute the model parameter \(\psi_1\)
  from \(c_1\), \(C_{\psi_1}\) and \(A_{\psi_1}=0\);
\item
  then, for each \(i\) increasing from \(2\) to \(N\):

  \begin{itemize}
  \tightlist
  \item
    with input parameter \(c_i\) compute the intermediate quantities
    \(A_{\psi_i}\), \(C_{\psi_i}\),
  \item
    compute the model parameter \(\psi_i\);
  \end{itemize}
\item
  for each option with strike \(K\) and maturity \(T_i\), compute the
  model price \(C(K,T_i)\);
\item
  evaluate
  \(\sum_{K, T_i}\bigl(\tilde C(K,T_i)-C(K,T_i)\bigr)^2\omega(K,T_i)\)
  and repeat all the steps with new parameters
  \(\{\rho_i\}_{i=1}^N,\theta_1,\{a_i\}_{i=2}^N,\{c_i\}_{i=1}^N\) in
  order to minimize this quantity.
\end{enumerate}

	Recall that the minimum search can be performed over the product of
intervals \cref{eqProdIntervals}, which is an appealing feature when
using standard minimization routines such as the \verb|least_squares|
function.

In general, trades in the market have different timestamps, so that
available option prices are not simultaneous. Then, in point 8., the
model price is evaluated using the forward computed at the timestamp of
the corresponding market option, so it could be possibly different for
different options. This is linked to the fact that model parameters are
supposed to be constant in logforward-moneyness. In particular, given a
market option with strike \(K\) and maturity \(T\) traded (or quoted) at
timestamp \(t\), the corresponding model price is: \begin{align*}
C(K,T) &= D_C(T)(F_t(T)\Phi(d_1) - K\Phi(d_2)),\\
d_1 &= \frac{\sigma_{\text{imp}}(K,T)\sqrt{T-t}}2 - \frac{k}{\sigma_{\text{imp}}(K,T)\sqrt{T-t}},\\
d_2 &= d_1-\sigma_{\text{imp}}(K,T)\sqrt{T-t},\\
\sigma_{\text{imp}}(K,T)\sqrt{T-t} &= \sqrt{\frac{\theta(T)+\rho(T)\psi(T) k + \sqrt{(\psi(T) k + \theta(T)\rho(T))^2 + \theta(T)^2(1-\rho(T)^2)}}{2}},\\
k &= \log\frac{K}{F_t(T)},
\end{align*} where \(D_C(T)\) is the discount factor at closing time
(here we suppose it does not change a lot during the day) and
\(F_t(T)=F_C(T)\frac{S_t}{S_C}\) is the current forward.

	\hypertarget{parameters-domain-and-initial-conditions}{%
\subsection{Parameters domain and initial
conditions}\label{parameters-domain-and-initial-conditions}}

	We set the following initial conditions:

\begin{itemize}
\tightlist
\item
  the \(a\) parameters are obtained guessing initial values for the
  \(\theta\)s from the ATM total implied variances;
\item
  initial \(\rho\)s are set to the intermediate value \(0\);
\item
  initial \(c\)s are set to \(0.5\).
\end{itemize}

	The \(a\) parameters could lie in an infinite range but, for
optimization reasons, it is a good practice to bound them. After many
tests on TA35 and NIS/USD data, we chose to impose \(a\) to be smaller
than \(0.05\). If the maximum of the initial values for the \(a\)s is
larger than the fixed upper bound, we double the latter bound. The other
parameters are already bounded from the definition of the Global eSSVI
model, indeed \(\rho\in]-1,1[\) and \(c\in]0,1[\).

The calibration weights \(\omega(K,T)\) are taken to be constant.

	\hypertarget{numerical-experiments}{%
\subsection{Numerical experiments}\label{numerical-experiments}}

	We show here the numerical results obtained with the GJ conditions for
the Global eSSVI parametrization and compare them with the well-know
rich and flexible parametric price surface of Carr and Pelts, which we
describe in the following section.

	\hypertarget{the-carr-pelts-tehranchi-model}{%
\subsubsection{The Carr-Pelts-Tehranchi
model}\label{the-carr-pelts-tehranchi-model}}

	Carr and Pelts presented in 2015, at a conference in honour of Steven
Shreve at Purdue university, an explicit arbitrage-free parametrization
for FX option prices, which at the time seemed to go completely under
the academic and practitioner radars. In 2019, in a deep and brilliant
paper on a subtle property of the Black-Scholes formula
\cite{tehranchi2020black}, Mike Tehranchi re-discovered independently
this family of models, with the more mathematical perspective of
semi-groups acting on sets of convex functions (Calls and Puts
normalized prices). Therefore, from now on we name this model using the
acronym CPT.

Unlike eSSVI, CPT gives a direct formula for the vanilla \emph{price},
not its implied volatility. The implied volatility is not a natural
object in the CPT family (except of course in the case of the
Black-Scholes model itself, which indeed belongs to this family); if one
needs to get the implied volatility, it is required to resort to
numerical algorithm like the excellent \emph{rationale} approach by
Jaeckel.

	In order to perform calibrations, we use the approach accurately
described in section 3 of \cite{antonov2019new}. Denote \(S_t\) the
current value of the underlyer (in case of dividends and rates not null,
\(S_t\) should be replaced by the Forward \(F_t\)) and pick up a
\emph{log-concave density} \(f=e^{-h}\) on \(\mathbb R\). Then, under
the risk-neutral measure, the law of the underlyer \(S\) at fixed
maturity will be given by \[S^{(\tau)}:= S_t \frac{f(\tau+Z)}{f(Z)}\]
where \(Z\) is a random variable with law \(f\), and \(\tau\) some real
positive parameter. It holds
\[E[(S^{(\tau)}-K)^+] = S_t \int_R f(\tau+z) dz -K \int_R f(z) dz\]
where
\(R:= \Bigl\{z\;\big| \frac{f(\tau+z)}{f(z)}> \frac{K}{S_t} \Bigr\}\).
Observe now that \(z \to \frac{f(\tau+z)}{f(z)}\) is non-decreasing,
indeed
\(\frac{d}{dz}\bigl(\log{\frac{f(\tau+z)}{f(z)}}\bigr)=-h'(\tau+z)+h'(z)\)
where \(h=-\log{f}\) is convex (so its second derivative is positive).
It follows that
\[d_f(\tau,k):=\sup\Bigl\{ z \; \big| \log{\frac{f(\tau+z)}{f(z)}}=k\Bigr\} = \sup\bigl\{ z \; \big| h(\tau+z)-h(z)=-k\bigr\}\]
is well defined and that \(R=\{z \leq d_f(\tau,k)\}\), where
\(k=\log\bigl(\frac{K}{S_t}\bigr)\). Eventually
\[E[(S^{(\tau)}-K)^+] = S_t\Omega(d_f(\tau,k)+\tau) -K \Omega(d_f(\tau,k))\]
where \(\Omega\) is the cumulative density function of \(f\).

All the above \emph{massively generalizes} the Black-Scholes formula,
which corresponds to the particular case
\(f(x)=\frac{1}{\sqrt{2 \pi}}\exp\bigl(\frac{-x^2}{2}\bigr)\).

The second ingredient to CPT is based on the remarks that:

\begin{enumerate}
\def\labelenumi{\arabic{enumi}.}
\tightlist
\item
  if \(\tau<z\), then for every \(K>0\),
  \(E[(S^{(\tau)}-K)^+]<E[(S^{(z)}-K)^+]\).
\item
  \(S_t \Omega(d_f(\tau,k)+\tau) -K \Omega(d_f(\tau,k))=(S_t-K)^+\) iff
  \(\tau=0\).
\end{enumerate}

It follows that if one chooses any non-decreasing continuous function
\(T \to \tau(T)\) such that \(\tau(0)=0\), then the \emph{price
surface}:
\[(K,T) \to S_t \Omega(d_f(\tau(T),k)+\tau(T)) - K\Omega(d_f(\tau(T),k))\]
is free of arbitrage.

	In Antonov et al.~specification, the function \(\tau\) is a
piecewise-linear function and \(h\) a piecewise-quadratic differentiable
convex function. It is calibrated using a grid of \(2N_{CPT}\) node
points. Then, the algorithm requires to make a choice on the number of
nodes used to calibrate the model density. There is always a trade-off
between taking \(N_{CPT}\) large, which could allow for an increased
fitting ability at the price of more instability in the results if there
are too few options in the calibration basket, and choosing a smaller
\(N_{CPT}\) with the opposite benefits or issues. A rational start is to
compare eSSVI and CPT number of parameters; eSSVI has \(3N\) parameters,
while CPT has \(N+2N_{CPT}\) parameters, so equating them gives
\(N_{CPT}=N\). In practice, on both the TA35 market and the NISUSD one,
the choice of \(N_{CPT}=6\) gives very good fit results and, since we
generally have no more than \(6\) maturities, it is even too accurate
when comparing with the number of eSSVI parameters.

	\hypertarget{price-and-volatility-plots}{%
\subsubsection{Price and volatility
plots}\label{price-and-volatility-plots}}

	The calibration algorithm has as a target the market prices. First, we
show the Call and Put prices on the date \(2021/10/26\) for both the
TA35 index (spot of \(1871.67\)) and the NIS/USD Forex (spot of
\(319.98\)). The calibration basket is composed of both trade and quote
prices in the last \(10\) minutes of trading, filtered to remove noisy
data and aggregated to a synthetic market price per option. The GJ
Global eSSVI and the CPT models are calibrated and the corresponding
model prices are shown in \Cref{figurePricesTA35,figurePricesNISUSD}
with the following notation:

\begin{itemize}
\tightlist
\item
  the marker \(\times\) indicates a quote;
\item
  the marker \(\cdot\) indicates a trade;
\item
  the vertical line indicates the bid-ask prices;
\item
  a black marker indicates a model price outside the bid-ask.
\end{itemize}

%\texttt{\color{outcolor}Out[{\color{outcolor}3}]:}

    \begin{figure}
    	\adjustimage{max size={0.9\linewidth}{0.9\paperheight}}{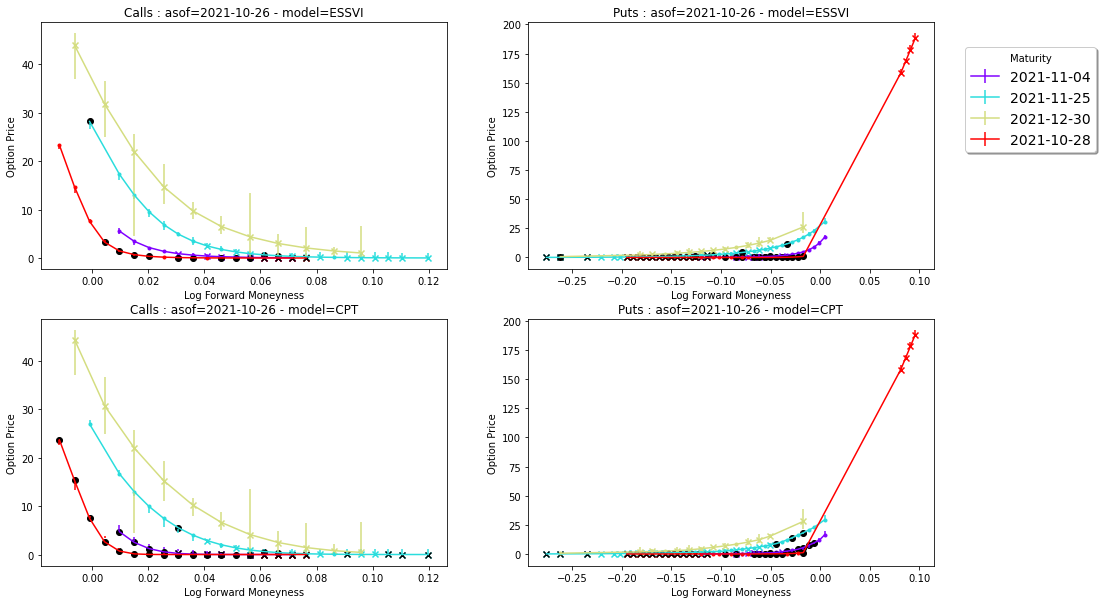}
    	\caption{Comparison of calibrated Call and Put prices on TA35 with the eSSVI and the CPT models.
    		Markers $\times$ indicate quotes, markers $\cdot$ indicate trades, black markers indicate model prices outside the bid-ask.}
    	\label{figurePricesTA35}
    \end{figure}
    { \hspace*{\fill} \\}

%\texttt{\color{outcolor}Out[{\color{outcolor}4}]:}

    \begin{figure}
    	\adjustimage{max size={0.9\linewidth}{0.9\paperheight}}{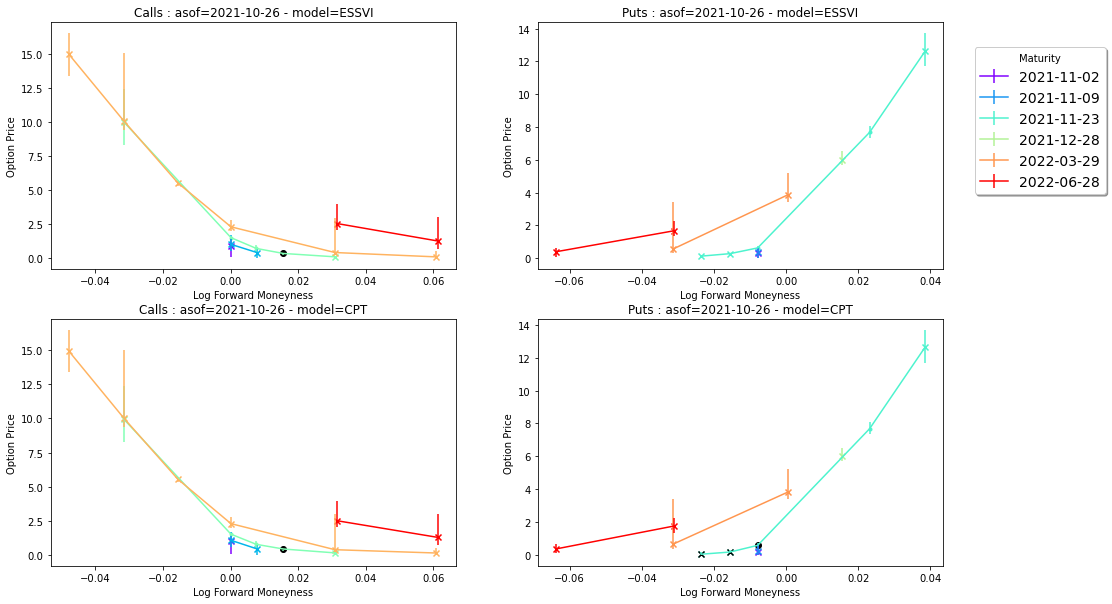}
    	\caption{Comparison of calibrated Call and Put prices on NIS/USD with the eSSVI and the CPT models. Markers $\times$ indicate quotes, markers $\cdot$ indicate trades, black markers indicate model prices outside the bid-ask.}
    	\label{figurePricesNISUSD}
    \end{figure}
    { \hspace*{\fill} \\}

	The corresponding absolute errors in basis point to the Forward are
shown in \Cref{figErrorTA35,figErrorNISUSD}.

%\texttt{\color{outcolor}Out[{\color{outcolor}5}]:}

    \begin{figure}
    	\adjustimage{max size={0.9\linewidth}{0.9\paperheight}}{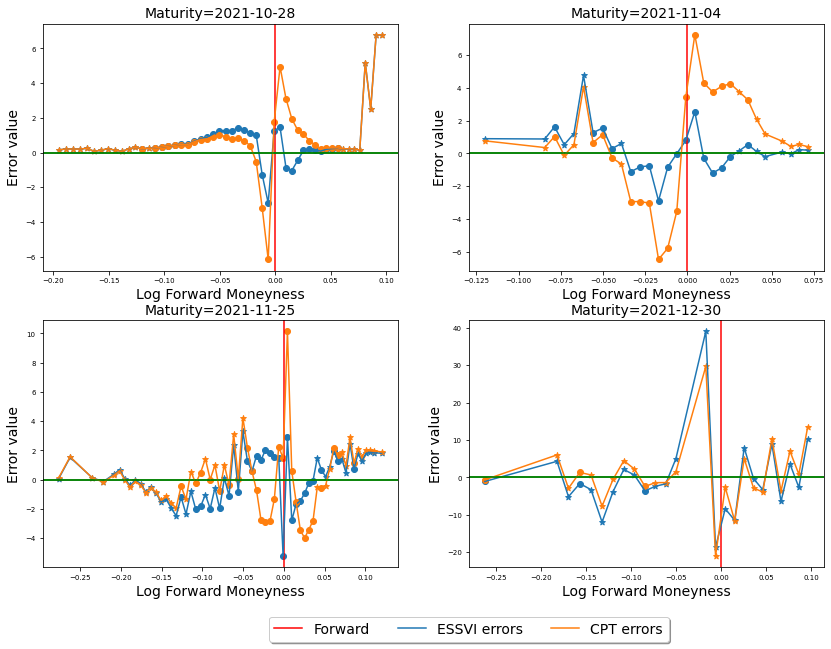}
    	\caption{Comparison of calibration errors in basis points to the Forward for OTM options, for TA35 in the eSSVI and the CPT models.}
    	\label{figErrorTA35}
    \end{figure}
    { \hspace*{\fill} \\}

%\texttt{\color{outcolor}Out[{\color{outcolor}6}]:}

    \begin{figure}
    	\adjustimage{max size={0.9\linewidth}{0.9\paperheight}}{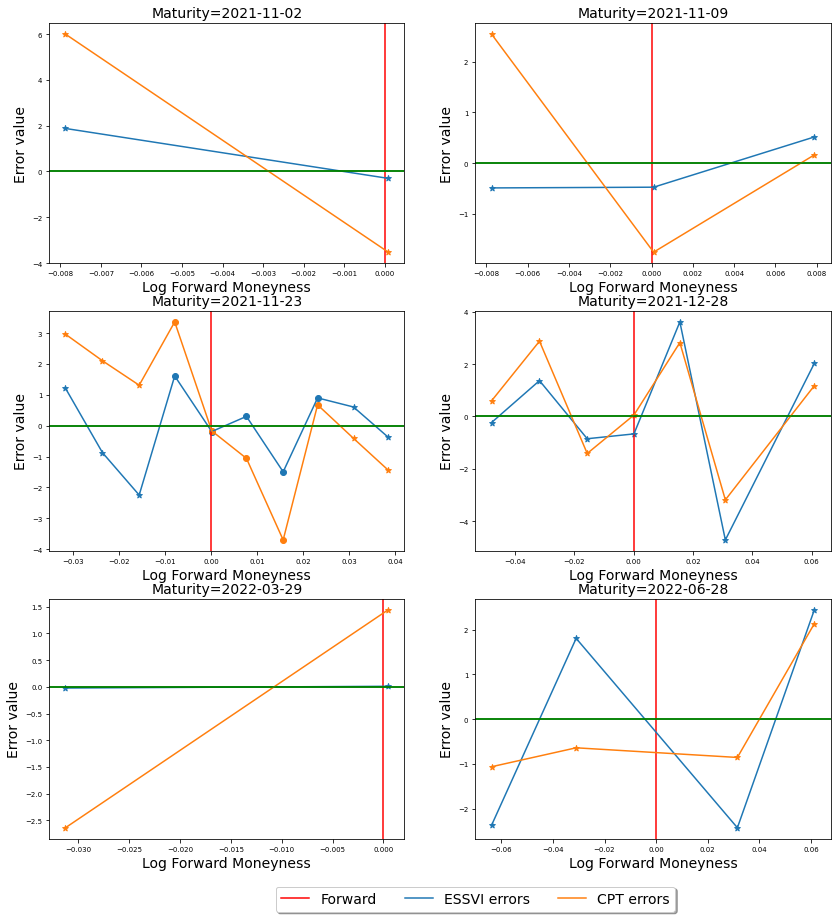}
    	\caption{Comparison of calibration errors in basis points to the Forward for OTM options, for NIS/USD in the eSSVI and the CPT models.}
    	\label{figErrorNISUSD}
    \end{figure}
    { \hspace*{\fill} \\}

	For the above plots, we also report the corresponding implied
volatilities in \Cref{figIVTA35,figIVNISUSD}.

%\texttt{\color{outcolor}Out[{\color{outcolor}7}]:}

    \begin{figure}
    	\adjustimage{max size={0.9\linewidth}{0.9\paperheight}}{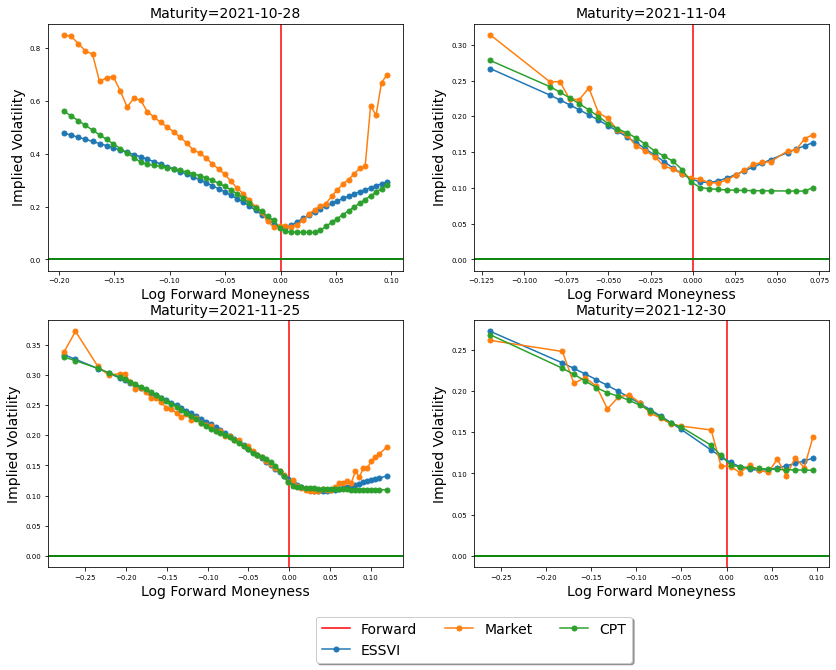}
    	\caption{Comparison of calibrated smiles for TA35 in the eSSVI and the CPT models.}
    	\label{figIVTA35}
    \end{figure}
    { \hspace*{\fill} \\}

%\texttt{\color{outcolor}Out[{\color{outcolor}8}]:}

    \begin{figure}
    	\adjustimage{max size={0.9\linewidth}{0.9\paperheight}}{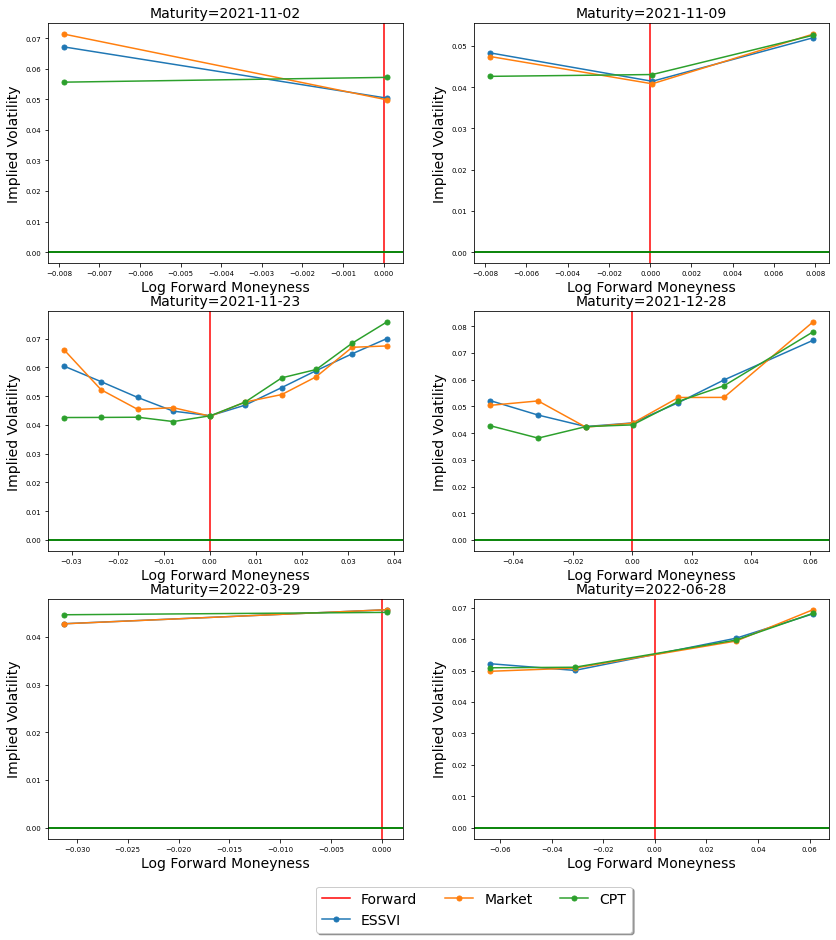}
    	\caption{Comparison of calibrated smiles for NIS/USD in the eSSVI and the CPT models.}
    	\label{figIVNISUSD}
    \end{figure}
    { \hspace*{\fill} \\}
    
	Results are very satisfactory and we can see that the two calibration
results are comparable if not even better for the GJ Global eSSVI. The
calibration of the shortest maturity is typically more difficult in
terms of error reduction, which explains the high number of black dots
in the Call and Put prices graphs. However, the ATM prices are always
well calibrated.

In general, the GJ Global eSSVI calibration is preferable to the CPT for
four reasons:

\begin{itemize}
\tightlist
\item
  from a theoretical point of view, it is easier to understand the role
  of its parameters and how to tune them in order to change the smile
  shape;
\item
  it directly models a volatility surface, so that the calibrated
  implied smiles have more natural and desirable shapes. Also, working
  with implied volatilities allows to compare between underlyers and
  dates, while directly comparing prices is less obvious;
\item
  it is much easier and straightforward to be coded;
\item
  calibration times are \(10\) times smaller than CPT calibration times.
\end{itemize}

	\hypertarget{no-arbitrage-check}{%
\subsection{No arbitrage check}\label{no-arbitrage-check}}

	As a sanity check of the code implementation, we implemented an
arbitrage identifier routine as described in Section 3 of
\cite{cohen2020detecting}. It is worth noticing that such routine is not
necessary since models have been shown to be arbitrage-free. However,
numerical approximations could arise in computed prices, even though
calibration parameters live in the no arbitrage domain.

We take all model prices resulting from the calibration routine and
verify (with linear constraints) whether there is any kind of arbitrage.
In particular, we check the positivity of prices and look for what
Reisinger et al.~\cite{cohen2020detecting} call Vertical Spread,
Vertical Butterfly, Calendar Spread, Calendar Vertical Spread, and
Calendar Butterfly arbitrage.

Results confirm the lack of arbitrage in prices resulting from both the
Global eSSVI and the CPT models. Some little arbitrage caused by
numerical approximations could arise with the former model for values of
\(\rho\) near \(\pm1\). A way to avoid this is to bound \(\rho\) in a
smaller interval, such as \(]-0.95,0.95[\). Calibration results in terms
of overall calibration error do not suffer from this choice.

	\hypertarget{interpolation-and-extrapolation}{%
\section{Interpolation and
extrapolation}\label{interpolation-and-extrapolation}}

	This section describes how to interpolate and extrapolate eSSVI
parameters on maturities which differ from the ones used in the
calibration, in order to guarantee the absence of arbitrage. These
methodologies are taken from \cite{corbetta2019robust}.

	\hypertarget{interpolation}{%
\subsection{Interpolation}\label{interpolation}}

	Suppose we have calibrated the model on two maturities \(T_1<T_2\) with
eSSVI parameters \((\rho_1,\theta_1,\psi_1)\) for the first one and
\((\rho_2,\theta_2,\psi_2)\) for the second. How could we interpolate
arbitrage-free parameters for a maturity \(t\in[T_1,T_2]\)? Similarly to
what is done in section 5.1.2 of \cite{corbetta2019robust}, we define
the new parameters \((\rho_t,\theta_t,\psi_t)\) through the scheme:

\begin{itemize}
\tightlist
\item
  \(\theta_{t} = (1-\lambda)\theta_1 + \lambda\theta_2\)
\item
  \(\psi_{t} = (1-\lambda)\psi_1 + \lambda\psi_2\)
\item
  \(\psi_{t}\rho_t = (1-\lambda)\psi_1\rho_1 + \lambda\psi_2\rho_2\)
\end{itemize}

where \(\lambda = \frac{t-T_1}{T_2-T_1}\). We now show that in such way
arbitrage conditions are satisfied.

	We look at the Calendar Spread conditions and take
\(T_1\leq t<u\leq T_2\) with \(\lambda = \frac{t-T_1}{T_2-T_1}\) and
\(\mu = \frac{u-T_1}{T_2-T_1}\). First, it is immediate that
\(\theta_u>\theta_t\). Second, the sufficient condition
\(\psi_u\theta_t-\psi_t\theta_u\leq0\) can be rewritten as
\((\mu-\lambda)(\psi_2\theta_1-\psi_1\theta_2)\leq0\), which is verified
since the two calibrated slices are arbitrage-free. Last, the second
necessary condition is equivalent to
\(\psi_u(1\pm\rho_u)-\psi_t(1\pm\rho_t)>0\). Substituting and
simplifying as above, we find
\((\mu-\lambda)(\psi_2(1\pm\rho_2)-\psi_1(1\pm\rho_1))>0\), which again
holds true.

	The Butterfly arbitrage conditions at time \(t\) are proven in
\cite{corbetta2019robust}.

	\hypertarget{extrapolation}{%
\subsection{Extrapolation}\label{extrapolation}}

	The extrapolation procedure is also taken from
\cite{corbetta2019robust}, sections 5.2 and 5.3. In the following, for
completeness we give full proofs of the consistency of this choice.

	\hypertarget{before-the-1st-maturity}{%
\subsubsection{Before the 1st maturity}\label{before-the-1st-maturity}}

	For \(t\) smaller than the first calibrated maturity \(T_1\), we set

\begin{itemize}
\tightlist
\item
  \(\theta_t = \lambda\theta_1\)
\item
  \(\psi_t = \lambda\psi_1\)
\item
  \(\rho_t = \rho_1\)
\end{itemize}

where \(\lambda=\frac{t}{T_1}<1\).

	It is easy to verify that the fact that the parameters \(\theta\) and
\(\psi\) are increasing while \(\rho\) is constant, combined with the
arbitrage-free calibration of these parameters on the first maturity,
guarantees the absence of arbitrage for the extrapolated triple
\((\theta_t,\rho_t,\psi_t)\). Indeed, the absence of arbitrage is easily
checked observing that the implied volatility on maturities \(t\) and
\(T_1\) is the same for fixed \(k\), so that total variances are
increasing with respect to maturity and cannot have Butterfly arbitrage,
given that the smile on \(T_1\) is Butterfly arbitrage-free.

To show that the implied volatilities coincide, we look at the eSSVI
formula on maturity \(T_1\)
\[\text{eSSVI}(K,T_1) = \frac{1}{2}\bigl(\theta_1+\rho_1\psi_1 k + \sqrt{(\psi_1 k + \theta_1\rho_1)^2 + \theta_1^2(1-\rho_1^2)}\bigr)\]
and on maturity \(t\)
\[\text{eSSVI}(\tilde K,t) = \frac{1}{2}\bigl(\lambda\theta_1+\rho_1\lambda\psi_1\tilde k + \sqrt{(\lambda\psi_1\tilde k + \lambda\theta_1\rho_1)^2 + (\lambda\theta_1)^2(1-\rho_1^2)}\bigr).\]
It is easy to see that if \(k=\tilde k\), corresponding to
\(\tilde K = K\frac{F_0(t)}{F_0(T_1)}\), then
\(\text{eSSVI}(\tilde K,t) = \lambda \text{eSSVI}(K,T_1) = \frac{t}{T_1}\text{eSSVI}(K,T_1)\)
and the conclusion immediately follows.

	\hypertarget{after-the-last-maturity}{%
\subsubsection{After the last maturity}\label{after-the-last-maturity}}

	Extrapolation on the right of the last calibrated maturity \(T_N\) is
performed setting

\begin{itemize}
\tightlist
\item
  \(\theta_t = \theta_N + \frac{\theta_N-\theta_{N-1}}{T_N-T_{N-1}}(t-T_N)\)
\item
  \(\psi_t = \psi_N\)
\item
  \(\rho_t = \rho_N\)
\end{itemize}

where \(t>T_N\). The first bullet point is such that \(\theta_t\)
preserves the last slope available, but it can be replaced with any
positive angular coefficient.

	Since \(\psi\) and \(\rho\) are constant, \(\theta\) is increasing and
the parameters satisfy no arbitrage conditions at maturity \(T_N\), it
is straightforward to show that the extrapolated parameters have no
arbitrage.

	\hypertarget{acknowledgments}{%
\section{Acknowledgments}\label{acknowledgments}}

	The findings in this paper and their implementation would not have been
possible without the key support of Zeliade team and especially Claude
Martini, who supported the project and laid the foundations for the
results, Pierre Cohort, who implemented the Python code, and Ismail
Laachir, who took care of the whole data manipulation.

I thank Stefano De Marco for the detailed and patient reading of the
article, and for the improvements suggested.

I also thank TASE for providing the data displayed in the present
article and in particular Or Amir and his team for technical suggestions
on the parameters setup.

    % Add a bibliography block to the postdoc

\newpage \bibliography{Biblio}
\bibliographystyle{plain}

\end{document}